\newif\ifnotes
\notestrue  

\documentclass[11pt]{article}
\usepackage[T1]{fontenc}
\usepackage[font=small]{caption} 
\usepackage{fullpage}
\usepackage[noend]{algorithmic}
\usepackage{algorithm}
\usepackage{amsfonts}
\usepackage{amsmath,amsthm}
\usepackage{amssymb}
 \usepackage{array,multirow,graphicx}
\usepackage{enumitem}
\usepackage[normalem]{ulem}
\usepackage{xcolor}
\usepackage{mdframed}
\usepackage{framed}
\usepackage{xspace}
\usepackage{marginnote}
\usepackage{soul}
\usepackage{breakcites}
\usepackage{mleftright}

\usepackage[pdfstartview=FitH,pdfpagemode=None,colorlinks,linkcolor=blue,filecolor=blue,citecolor=gray,urlcolor=red]{hyperref}
\usepackage[capitalize]{cleveref}
\usepackage{multicol}
\usepackage{geometry}
\usepackage{palatino} 
\fontfamily{ppl}\selectfont

\usepackage[textsize=tiny]{todonotes}
\newcommand{\mnote}[1]{\ifnotes{\color{purple}$\ll$\textbf{Mor}: {#1}$\gg$}\fi}

\newif\iffull
\fulltrue  
\newif\ifconf
\conffalse 

\newcommand{\poly}{{\mathsf{poly}}}

\newcommand{\img}{{\mathrm{image}}}

\newcommand{\test}{\mathrm{TEST}}

\theoremstyle{plain}
\newtheorem{thm}{Theorem}[section]      
\newcommand{\BT}{\begin{thm}}   \newcommand{\ET}{\end{thm}}
\newtheorem{dfn}[thm]{Definition}      %
\newcommand{\BD}{\begin{dfn}}   \newcommand{\ED}{\end{dfn}}
\newtheorem{corr}[thm]{Corollary}      %
\newcommand{\BCR}{\begin{corr}} \newcommand{\ECR}{\end{corr}}
\newtheorem{conj}[thm]{Conjecture}      %
\newcommand{\BCO}{\begin{conj}} \newcommand{\ECO}{\end{conj}}
\newtheorem{construction}[thm]{Construction}      %
\newcommand{\BCON}{\begin{construction}} \newcommand{\ECON}{\end{construction}}

\newtheorem{Ithm}{Theorem}
\newcommand{\BIT}{\begin{Ithm}}   \newcommand{\EIT}{\end{Ithm}}
\newtheorem{lem}[thm]{Lemma}
\newcommand{\BL}{\begin{lem}}   \newcommand{\EL}{\end{lem}}
\newtheorem{prop}[thm]{Proposition}
\newcommand{\BP}{\begin{prop}}   \newcommand{\EP}{\end{prop}}
\newtheorem{clm}[thm]{Claim}            %
\newcommand{\BCM}{\begin{clm}}   \newcommand{\ECM}{\end{clm}}
\newtheorem{fact}[thm]{Fact}            %
\newcommand{\BF}{\begin{fact}}   \newcommand{\EF}{\end{fact}}
\renewenvironment{proof}{\noindent{\bf Proof:~~}}{\qed}
\newcommand{\BPF}{\begin{proof}} \newcommand {\EPF}{\end{proof}}
\newtheorem{prot}{Protocol}      
\newcommand{\BPR}{\begin{prot}}   \newcommand{\EPR}{\end{prot}}
\newenvironment{cproof}{\noindent{\bf Proof:~~}}{\hfill $\Box$}
\newcommand{\BCPF}{\begin{cproof}} \newcommand {\ECPF}{\end{cproof}}

\newtheorem{remarkk}[thm]{Remark}            %
\newcommand{\BR}{\begin{remarkk}}   \newcommand{\ER}{\end{remarkk}}
\newcommand{\BDE}{\begin{description}}
\newcommand{\EDE}{\end{description}}
\newcommand{\BE}{\begin{enumerate}}
\newcommand{\EE}{\end{enumerate}}
\newcommand{\BI}{\begin{itemize}}
\newcommand{\EI}{\end{itemize}}
\newcommand{\BEQ}{\begin{eqnarray*}}
\newcommand{\EEQ}{\end{eqnarray*}}

\def\blackslug
{\hbox{\hskip 1pt\vrule width 8pt height 8pt depth 1.5pt\hskip 1pt}}
\def\qed{\quad\blackslug\lower 8.5pt\null\par}

\newenvironment{boxfig}[2]{\begin{figure}[#1]\fbox{\begin{minipage}{.95\columnwidth}
                        \vspace{0.2em}
                        \makebox[0.025\columnwidth]{}
                        \begin{minipage}{0.95\columnwidth}
            {\small{
                        #2 }}
                        \end{minipage}
                        \vspace{0.2em}
                        \end{minipage}}}{\end{figure}}

\newcommand{\cI}{{\cal I}}

\newcommand{\textdef}[1]{\textnormal{\textsf{#1}}}

\newcommand{\FF}{\mathbb{F}}

\newcommand{\NN}{\mathbb{N}}

\newcommand{\size}[1]{\left|{#1}\right|}

\newcommand{\Set}[1]{\left[#1\right]}

\newcommand{\Enc}{{\mathsf{Enc}}}

\newcommand{\msg}{{\sf m}}
\newcommand{\Supp}{{\sf Supp}}

\newcommand{\wt}{\mathrm{wt}}

\title{A Note on the Equivalence Between \\ Zero-knowledge and Quantum CSS Codes}

\author{ Noga Ron-Zewi\thanks{Department of Computer Science, University of Haifa. Email: \texttt{noga@cs.haifa.ac.il}. } \and Mor Weiss \thanks{Faculty of Engineering, Bar-Ilan University. Email: \texttt{mor.weiss@biu.ac.il}.}}

\date{}

\begin{document}

\maketitle
 \thispagestyle{empty}
\pagestyle{plain}


\begin{abstract}
Zero-knowledge codes, introduced by Decatur, Goldreich, and Ron~\cite{DecaturGR20}, are error-correcting codes in which few codeword symbols reveal no information about the encoded message, and have been extensively used in cryptographic constructions. Quantum CSS codes, introduced by Calderbank and Shor~\cite{CS96} and Steane~\cite{Ste96}, are error-correcting codes that allow for quantum error correction, and are also useful for applications in quantum complexity theory.
In this short note, we show that (linear, perfect) zero-knowledge codes and quantum CSS codes are equivalent. We demonstrate the potential of this equivalence by using it to obtain explicit asymptotically-good zero-knowledge locally-testable codes.
\end{abstract}

\section{Introduction}

In this note, we show an equivalence between two well-studied families of codes: Zero-knowledge codes and quantum CSS codes. 
We first briefly describe these families of codes and their applications.

\paragraph{Zero-knowledge codes.}
Zero-Knowledge (ZK) codes are error-correcting codes with a randomized encoding, in which few codeword symbols reveal nothing about the encoded message. More accurately, a 
$t$-\emph{Zero-Knowledge} (ZK) code $C\subseteq\FF^n$, for some $t\in\NN$, is associated with a randomized encoding map $\Enc_C:\FF^k\rightarrow\FF^n$ and has the following guarantee. For every $\msg,\msg'\in\FF^k$, and any $\cI\subseteq\Set{n}$ of size $\size{\cI}\leq t$, we have that $\Enc_C(\msg)|_\cI$ and $\Enc_C(\msg')|_\cI$ are identically distributed, where $\Enc_C(\msg)|_\cI$ denotes the restriction to $\cI$ of a randomly-generated encoding of $m$.  In this note, we focus on {\em non-adaptive, perfect} ZK codes, as defined above. That is, the queries of an adversary to the codeword are determined non-adaptively, and the resultant distribution is identical for every pair of messages. We note that the non-adaptive and adaptive settings are known to be equivalent~\cite[Appendix C]{BCL20}, and that relaxations to the statistical setting (where the distributions $\Enc_C(\msg)|_\cI, \Enc_C(\msg')|_\cI$ are statistically close) have also been considered in the literature~\cite{IshaiSVW13}.  We additionally focus on \emph{linear} ZK codes, meaning that $C$ is a linear subspace of $\FF^n$, and the encoding map $\Enc_C$ is linear in the message $\msg$ and the randomness used for encoding.

ZK codes were first formally defined by Decatur, Goldreich, and Ron~\cite{DecaturGR20}, and have been used (either explicitly or implicitly) in numerous applications in cryptography. For example, these codes have been used  in Shamir's secret sharing~\cite{Shamir79}, and  lie at the heart of information-theoretically secure Multi-Party Computation (MPC) protocols (starting from~\cite{Ben-OrGW88,ChaumCD88}). ZK codes also have applications to memory delegation (e.g., in PIR schemes~\cite{ChorGKS95}), as well as for the design of information-theoretic proofs systems such as Probabilistically Checkable Proofs (PCPs)~\cite{AroraLMSS92,AroraS92} and Interactive Oracle Proofs (IOPs)~\cite{Ben-SassonCS16,ReingoldRR16} with ZK guarantees (e.g., in~\cite{Ben-SassonCGV16,BCFGRS17,ChiesaFGS18,BBHR19,BCL20,RW24,GurOS25}). 

Many of the early applications of ZK codes relied on ZK properties of \emph{polynomial-based} codes (in which the message is interpreted as a low-degree polynomial, and the corresponding codeword is the evaluation table of the polynomial). More recent work also studied and exploited ZK properties of other codes --- such as concatenated codes~\cite{DecaturGR20}, interleaved codes~\cite{AmesHIV17,BCL20,ChiesaFW26}, 
and tensor codes~\cite{IshaiSVW13,BCL20,RW24} --- with the goal of obtaining improved efficiency such as smaller alphabet size, higher rate, and faster encoding and decoding algorithms. These parameters are tightly connected to the efficiency measures of the resulting applications. Several other works have also studied more generally the parameters achievable by ZK codes~\cite{ChenCGHV07, IshaiSVW13, CDN2015}, 
suggested alternative characterizations of ZK codes~\cite{ChenCGHV07,IshaiSVW13, CDN2015, BCL20}, 
and introduced general frameworks for constructing such codes~\cite{DecaturGR99,DecaturGR20,FeldMSS04, IshaiSVW13}. 

\paragraph{Quantum CSS codes.} 

A quantum CSS code consists of a pair of linear codes $C_X, C_Z \subseteq \FF^n$ so that their dual codes $C_X^\perp, C_Z^\perp$ are orthogonal to each other (that is, $\langle c,c' \rangle=0$ for any $c \in C_X^\perp, c' \in C_Z^\perp$), with the additional guarantee that  
any vector in $C_X \setminus C_Z^\perp$ and any vector in $C_Z \setminus C_X^\perp$ has a large weight.  

Quantum CSS codes were first introduced by Calderbank and Shor~\cite{CS96} and Steane~\cite{Ste96}, and have been extensively studied since then. CSS codes are used to protect quantum computation from errors, and also have applications in quantum complexity theory. 
For example, the recent progress~\cite{ABN23} on the quantum PCP conjecture~\cite{AN02, AALV09} relied on recent breakthroughs on the construction of asymptotically good LDPC quantum CSS codes~\cite{PK22}.
The resemblance of CSS codes to classical codes enables one to draw on the vast literature on classical error-correcting codes for their design, and by now we know of various constructions of such codes.  

\paragraph{Equivalence between zero-knowledge and quantum CSS codes.}

In this note, we show that ZK codes and quantum CSS codes are equivalent. Specifically, we show a transformation from a ZK code into a quantum CSS code (and vice versa) so that the guarantee on the weight of vectors in $C_Z \setminus C_X^{\perp}$ in the CSS code translates into the ZK property of the ZK code, while the guarantee on the 
weight of vectors in $C_X \setminus C_Z^{\perp}$ translates into the decoding property of the ZK code (the latter is also typically required in applications); see Theorem~\ref{thm:equiv} and Corollary~\ref{cor:equiv} for a formal statement of this equivalence. 

While we view this equivalence as interesting in its own right, it is also motivated by the hope that this connection might shed light on our understanding of these codes, or yield new ZK codes or CSS codes by combining the equivalence with existing constructions. We demonstrate this potential by describing one concrete application of the equivalence. Specifically, in Section \ref{s:app}, we translate recent breakthroughs on the construction of asymptotically-good locally testable quantum codes~\cite{DLV24, KP25, WLH25} into explicit asymptotically-good zero-knowledge locally-testable codes.

Finally, we also mention that another connection between quantum computation and zero-knowledge was discovered in~\cite{LacerdaRR19}, who used fault-tolerance in the quantum setting to obtain leakage resilience in the classical computation setting. This gives another indication to the usefulness of exploring the connection between zero-knowledge and quantum computation. 

\section{Preliminaries}\label{s:prelims}
Let $\FF$ be a finite field, and let $u,v\in\FF^n$. We use $\langle u,v\rangle$ to denote the inner product between $u,v$, namely, $\langle u,v\rangle:=\sum_{i=1}^n u_i\cdot v_i$. We also let
 $\Delta(u,v):= \left|\left\{ i\in\left[n\right]: u_i\ne v_i \right\} \right|$,  
 $\wt(u):= \Delta(u,0)$, and for a subset $S \subseteq \FF^n$, we let $\Delta(u,S):=\min_{s\in S}\Delta(u,s)$.

A \textsf{linear (error-correcting) code} is a subspace $C\subseteq\FF^{n}$ over 
$\FF$.  We call $\FF$ and $n$ the \textsf{alphabet} and the
\textsf{block length} of the code, respectively, and the elements of $C$ are called \textsf{codewords}. 
 The \textsf{rate}
of $C $ is the ratio $R:=\frac{\dim(C)}{n}$,  and it measures the amount of redundancy in encoding. 
The \textsf{(Hamming) distance} of $C$ is $\Delta(C):= \min_{c \neq c' \in C} \Delta(c,c')$, which for a linear code equals $\wt(C):=\min_{0\neq c\in  C} \wt(c) $. 
Intuitively, the minimum distance of a code measures the amount of noise tolerance of the code. Specifically, the channel might corrupt some of the entries of a transmitted codeword $c$ during transmission and the receiver might receive a string $w \in \FF^n$. However, if $w$ and $c$ differ on less than $\frac {\Delta(C)} 2$ entries, then $c$ can be recovered uniquely by the decoder by searching for the unique codeword in $C$ that is closest to the   received word $w$. 

A \textsf{generator matrix} for a linear code $C \subseteq \FF^n$ of dimension $k$ is a (full-rank) matrix $G \in \FF^{n \times k}$ so that $\img(G) = C$, and a \textsf{parity-check matrix} for $C$ is a (full-rank) matrix $H \in \FF^{ (n-k) \times n}$ so that $\ker(H)= C$ (note that both $G$ and $H$ are not unique). The \textsf{dual code} of $C$ is the code $C^\perp \subseteq \FF^n$ containing all strings $c' \in \FF^{n}$ satisfying that $\langle c', c\rangle = 0$ for all $c \in C$. 
It follows by definition that $(C^{\perp})^{\perp} = C$, and that $H$ is a parity-check matrix for $C$ if and only if $H^T$ is a generator matrix for $C^{\perp}$.

\subsection{Zero-Knowledge (ZK) Codes}\label{s:zk-code}
Zero-Knowledge (ZK) codes, introduced by Decatur, Goldreich, and Ron~\cite{DecaturGR20}, are codes in which few codeword symbols reveal no information about the message. For this to be possible, we need to associate the code $C$ with a \emph{randomized encoding map}, where $C$ has ZK with respect to this randomized encoding. We will focus on an encoding map that is linear in the message and the randomness used for encoding (such a map was used in most prior works, e.g.,~\cite{DecaturGR20,BCL20,RW24}).

\BD[Randomized Encoding Map]\label{d:rand-enc-c}
Let $C \subseteq \FF^n$ be a linear code of dimension $k$, let $G \in \FF^{n \times k}$ be a generator matrix for $C$, and let $k'<k$ be a parameter.  
The $k'$-\textdef{randomized encoding map for $G$}  is a random map 
$\Enc: \FF^{k'} \to C$, which on input message $\msg \in\FF^{k'}$, samples a uniformly random $r \in \FF^{k-k'}$, and outputs $G\cdot z$, where $z=(\msg,r) \in \FF^k$.
\ED

\noindent The zero-knowledge property of an encoding map is defined as follows. 

\BD[Zero-Knowledge (ZK) Code]\label{d:zk_code}
Let $C \subseteq \FF^n$ be a linear code of dimension $k$, let $G \in \FF^{n \times k}$ be a generator matrix for $C$, and let $k'<k$ and $t<n$ be parameters. We say that the $k'$-randomized encoding map $\Enc: \FF^{k'} \to C$ for $G$ is  $t$-\textsf{Zero-Knowledge ($t$-ZK)} if 
 for any $\cI\subseteq\Set{n}$ of size $t$, and for any pair  of messages $\msg,\msg'\in\FF^{k'}$, we have $\Enc(\msg)|_\cI\equiv\Enc(\msg')|_\cI$. 
\ED 

\paragraph{Error Correction in ZK Codes.}
Applications of zero-knowledge codes also typically require that one can recover the message $\msg \in \FF^{k'}$ from its randomized encoding. Specifically, let $C \subseteq \FF^n$ be a linear code of dimension $k$, let $G \in \FF^{n \times k}$ be a generator matrix for $C$, and let $k'<k$ be a parameter. 
We say that the $k'$-randomized encoding map $\Enc: \FF^{k'} \to C$ for $G$ is \textsf{decodable from $e$ errors} if  there is a (deterministic) algorithm $D$ so that for every message $\msg \in \FF^{k'}$ and for every $y \in \FF^n$ with $\wt(y) \leq e$, it holds that $\Pr[D (\Enc(\msg) + y) =\msg]=1$ 
(see, e.g., \cite[Thm. 1]{DecaturGR20}). Note that the existence of such a (not necessarily efficient) decoding algorithm $D$ is equivalent to the property that for any pair of distinct messages $\msg \neq \msg' \in \FF^{k'}$ and (not necessarily distinct) $r,r' \in \FF^{k-k'}$, it holds that $\Delta( G\cdot z, G \cdot z' )> 2e$, where $z=(\msg,r)$ and $z'=(\msg',r')$. Indeed, if this latter property holds, then given $w:=\Enc(\msg) + y$, the decoder $D$ can find $\msg$ by searching for $z \in \FF^k$ so that $G \cdot z$ is closest to $w$. 

\subsection{Quantum CSS Codes}\label{s:css}

Quantum CSS codes, introduced by Calderbank and Shor~\cite{CS96} and Steane~\cite{Ste96}, are error-correcting codes that allow for quantum error correction, and are defined as follows.

\BD[CSS Code]\label{d:quantum_code}
A \textsf{CSS code} is a pair of linear codes $C_X, C_Z \subseteq \FF^n$ so that the subspaces $C_X^{\perp}$ and $C_Z^{\perp}$ are orthogonal (that is, $\langle c, c'\rangle =0$ for any $c \in C_X^{\perp}$ and $c'\in C_Z^{\perp}$).   The \textsf{rate} of $(C_X,C_Z)$ is $\frac{\dim(C_X) - \dim(C_Z^\perp)} {n} = \frac{\dim(C_Z) - \dim(C_X^\perp)} {n}$.
The \textsf{distance} $d_X$ ($d_Z$, respectively) is defined as the smallest weight of a vector of $C_X$ not in $C_Z^{\perp}$ ($C_Z$ not in $C_X^{\perp}$, respectively). The \textsf{distance} of $(C_X,C_Z)$ is defined as $d=\min \{d_X,d_Z\}$. 
\ED

\section{Zero-knowledge and Quantum CSS Codes Are Equivalent}\label{s:equiv}

We show that ZK and CSS codes (Definitions \ref{d:zk_code} and ~\ref{d:quantum_code}, respectively) are equivalent. 
Specifically, the following theorem shows how to transform ZK codes into CSS codes with similar parameters. 

\begin{thm}[ZK Codes are CSS Codes]\label{thm:equiv}
Let $C \subseteq \FF^n$ be a linear code of dimension $k$, let $G \in \FF^{n \times k}$ be a generator matrix for $C$, let $k'<k$ be a parameter, and let $\Enc: \FF^{k'} \to C$ be the $k'$-randomized encoding map for $G$. Let $C_X = C $, and let $C_Z \subseteq \FF^n$ be the subspace orthogonal to the span of the last $k-k'$ columns of $G$. 
Then the following holds:
\begin{enumerate}
\item $C_X^{\perp}$ and $C_Z^{\perp}$ are orthogonal (so $(C_X,C_Z)$ is a $\mathrm{CSS}$ code).
\item $d_X > 2e$ if and only if  $\Enc$ is decodable from $e$ errors.
\item $d_Z > t$ if and only if $\Enc$ is $t$-ZK. 
\end{enumerate}
\end{thm}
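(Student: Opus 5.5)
Write $G=[G_1 \mid G_2]$, where $G_1\in\FF^{n\times k'}$ holds the first $k'$ columns and $G_2\in\FF^{n\times(k-k')}$ the last $k-k'$. Set $R:=\img(G_2)\subseteq C$. By definition $C_Z=R^\perp$, so $C_Z^\perp=R$ and $C_X^\perp=C^\perp$. The plan is to rewrite every object in the statement in terms of $C$, $R$, $C^\perp$ and $R^\perp$. We will use $(V^\perp)^\perp=V$ and, since $R\subseteq C$, also $C^\perp\subseteq R^\perp$.

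For item 1, take $c\in C^\perp$ and $c'\in R\subseteq C$. Then $\langle c,c'\rangle=0$ immediately.

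For item 2, note first that $C_X\setminus C_Z^\perp=C\setminus R$. A vector $G z$ with $z=(\msg,r)$ lies in $R$ exactly when $\msg\neq 0$ fails, i.e.\ exactly when $\msg=0$; this uses that $G$ has full rank, so $G z\in \img(G_2)$ forces the first $k'$ coordinates of $z$ to vanish. Hence $C\setminus R=\{Gz: \msg\neq 0\}$. By linearity, the differences $Gz-Gz'$ with $\msg\neq\msg'$ range over exactly this set. So the decoding criterion recalled in Section~\ref{s:zk-code}, namely $\Delta(Gz,Gz')>2e$ for all $\msg\neq\msg'$, is equivalent to every vector of $C\setminus R$ having weight $>2e$, i.e.\ to $d_X>2e$.

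Item 3 is the main step. Fix $\cI$ with $|\cI|=t$. The map $r\mapsto G_2r|_\cI$ is linear, so $\Enc(\msg)|_\cI$ is uniform on the coset $G_1\msg|_\cI+\img(G_2|_\cI)$. Therefore $t$-ZK holds iff $\img(G_1|_\cI)\subseteq \img(G_2|_\cI)$ for every such $\cI$, i.e.\ iff $\img(G|_\cI)=\img(G_2|_\cI)$, i.e.\ iff $C|_\cI=R|_\cI$ for all $|\cI|=t$. We then dualize inside $\FF^\cI$. Since $R|_\cI\subseteq C|_\cI$, equality holds iff the two have the same annihilators. A vector $u\in\FF^{\cI}$ annihilates $R|_\cI$ iff its zero-extension $\tilde u\in\FF^n$ lies in $R^\perp=C_Z$, and annihilates $C|_\cI$ iff $\tilde u\in C^\perp=C_X^\perp$. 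So $C|_\cI=R|_\cI$ iff every vector of $C_Z$ supported inside $\cI$ lies in $C_X^\perp$.

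Ranging over all $\cI$ of size $t$, this says every vector of $C_Z$ of weight $\le t$ lies in $C_X^\perp$, which is $d_Z>t$. This last translation uses $t<n$, so that every vector of weight at most $t$ has support contained in some set of size exactly $t$. The obstacle I expect is getting this duality step exactly right: the annihilator of a restricted code corresponds to dual vectors supported in $\cI$. This is a standard fact, $(V|_\cI)^\perp=\{u : \tilde u\in V^\perp\}$, which I would verify directly from the definitions.
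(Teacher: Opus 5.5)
Your proof is correct. Items 1 and 2 are essentially the paper's argument, but for item 3 you take a more direct route.

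The paper first proves the matrix-level criterion of Lemma~\ref{lem:equiv}. It does so via Claim~\ref{clm:equiv_1} (ZK iff $0\in\Supp(\Enc(\msg)|_\cI)$ for every $\msg$) and Claim~\ref{clm:equiv_2}. The latter is a Fredholm-alternative argument on the system $B r=-A\msg$, which produces a row combination of $G|_\cI$ that is nonzero on the first $k'$ coordinates and zero on the last $k-k'$. Only afterwards does the paper reinterpret such row combinations as vectors of $C_Z\setminus C_X^\perp$ of weight at most $t$.

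You instead note immediately that $\Enc(\msg)|_\cI$ is uniform on a coset of $\img(G_2|_\cI)$. This turns ZK on $\cI$ into the equality of projections $C|_\cI=C_Z^\perp|_\cI$. The puncturing/shortening duality $(V|_\cI)^\perp=\{u:\tilde u\in V^\perp\}$ then translates this directly into a statement about vectors of $C_Z$ and $C_X^\perp$ supported in $\cI$.

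The underlying linear algebra is the same: your comparison of annihilators, $\img(G_2|_\cI)^\perp\subseteq\img(G_1|_\cI)^\perp$, is precisely the content of Claim~\ref{clm:equiv_2}. Your formulation, however, is basis-free. It makes clear that the ZK property depends only on the nested pair of subspaces $C_Z^\perp\subseteq C$, not on the chosen generator matrix, and it skips the intermediate step about supports containing $0$. What the paper's route gains in exchange is the explicit criterion in terms of rows of $G$ (Lemma~\ref{lem:equiv}), which it relates to existing characterizations of ZK codes in the literature.
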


\noindent The inverse transformation, from CSS Codes to ZK codes, follows as a corollary of Theorem~\ref{thm:equiv}:

\begin{corr}[CSS Codes are ZK Codes]\label{cor:equiv}
Let $C_X, C_Z \subseteq \FF^n$ be linear codes 
so that $C_X^{\perp}$ and $C_Z^{\perp}$ are orthogonal (i.e., $(C_X,C_Z)$ is a CSS code). Let $k = \dim(C_X)$ and $k'= k- \dim(C_Z^\perp)$.
Let $C=C_X$, let $G \in \FF^{n \times k}$ be a generator matrix for $C$ whose last $k-k'$ columns form a basis for $C_Z^\perp$,\footnote{Such a $G$ exists since $C_Z^\perp \subseteq (C_X^\perp)^\perp=C_X$.}, 
and let $\Enc: \FF^{k'} \to C$ be the $k'$-randomized encoding map for $G$.
Then the following holds:
\begin{enumerate}
\item $d_X > 2e$ if and only if  $\Enc$ is decodable from $e$ errors.
\item $d_Z > t$ if and only if $\Enc$ is $t$-ZK. 
\end{enumerate}
\end{corr}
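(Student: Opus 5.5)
The plan is to reduce the corollary to Theorem~\ref{thm:equiv}. Starting from the CSS code $(C_X,C_Z)$, we build the ZK code $C=C_X$ with the generator matrix $G$ described in the statement. Applying the theorem's construction to this $G$ produces a pair of codes, and we check that this pair is exactly $(C_X,C_Z)$. Items 2 and 3 of Theorem~\ref{thm:equiv} then give items 1 and 2 of the corollary verbatim.

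\textbf{Existence of $G$.} First I check that $G$ exists and has the right shape. Since $C_X^\perp$ and $C_Z^\perp$ are orthogonal, every vector of $C_Z^\perp$ is orthogonal to $C_X^\perp$. Hence $C_Z^\perp\subseteq (C_X^\perp)^\perp=C_X$. Take a basis of $C_Z^\perp$, of size $\dim(C_Z^\perp)=k-k'$, and extend it to a basis of $C_X$ by adding $k'$ further vectors. Put the added vectors in the first $k'$ columns and the basis of $C_Z^\perp$ in the last $k-k'$ columns. This gives a full-rank $G\in\FF^{n\times k}$ with $\img(G)=C_X$. We need $k'<k$ so that the randomized encoding map of Definition~\ref{d:rand-enc-c} is defined, which means we need $C_Z^\perp\neq\{0\}$. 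If $C_Z^\perp=\{0\}$, the randomness is trivial and the statement degenerates. I would note this edge case and assume $\dim(C_Z^\perp)\ge1$, or treat $k'=k$ as allowed.

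\textbf{Matching the construction.} Next I identify the code that Theorem~\ref{thm:equiv} assigns to $G$. By construction $C_X$ plays the role of $C$. The theorem's second code is the subspace orthogonal to the span of the last $k-k'$ columns of $G$. That span is $C_Z^\perp$, so this code is $(C_Z^\perp)^\perp=C_Z$, using $(C^\perp)^\perp=C$ from the preliminaries. So the pair produced by Theorem~\ref{thm:equiv} is exactly $(C_X,C_Z)$, and the distances $d_X,d_Z$ appearing in the theorem are those of the given CSS code. Items 2 and 3 of the theorem then yield the corollary directly.

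The only step requiring any care is confirming that the theorem's $C_Z$ coincides with the given $C_Z$, which rests on double duality. Beyond that, there is no real obstacle.
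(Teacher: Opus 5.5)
Your proposal is correct and follows essentially the same route as the paper: the corollary is read off from Theorem~\ref{thm:equiv}, because the span of the last $k-k'$ columns of $G$ is $C_Z^\perp$, whose orthogonal complement is $(C_Z^\perp)^\perp=C_Z$, so the theorem's pair is exactly $(C_X,C_Z)$. Your extra remarks are accurate refinements: the construction of $G$ matches the paper's footnote, and the degenerate case $C_Z^\perp=\{0\}$ (where $k'<k$ fails) is one the paper leaves implicit.
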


\noindent The above Corollary~\ref{cor:equiv} follows from Theorem~\ref{thm:equiv} by noting that $C_Z^\perp$ is exactly the span of the last $k-k'$ columns of $G$. We therefore turn our attention to proving Theorem~\ref{thm:equiv}. The proof relies on the following equivalent definition of a ZK code. 

\begin{lem}[ZK Codes, Equivalent Formulation]\label{lem:equiv}
Suppose that $C \subseteq \FF^n$ is a linear code of dimension $k$, let $G \in \FF^{n \times k}$ be a generator matrix for $C$, and let $k'<k$ and $t<n$ be positive integers. Then the $k'$-randomized encoding map $\Enc: \FF^{k'} \to C$ 
for $G$ is $t$-ZK if and only if any linear combination of any $t$ rows in $G$ does not result in a non-zero $w \in \FF^k$ such that $w|_{[k]\setminus [k']}=0$.
\end{lem}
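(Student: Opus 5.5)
Fix a set $\cI\subseteq\Set{n}$ of size $t$ and let $G_\cI\in\FF^{t\times k}$ be the submatrix of $G$ formed by the rows indexed by $\cI$. Split it by columns as $G_\cI = [A \mid B]$, where $A\in\FF^{t\times k'}$ acts on the message and $B\in\FF^{t\times(k-k')}$ acts on the randomness. Then $\Enc(\msg)|_\cI = A\msg + Br$ with $r$ uniform in $\FF^{k-k'}$. The map $r\mapsto Br$ is linear, so this random variable is uniform on the affine subspace $A\msg+\img(B)$. The plan is to reduce $t$-ZK to a purely linear-algebraic condition on $A$ and $B$, and then dualize that condition into the statement about row combinations.

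\emph{Step 1 (distributional reduction).} Two uniform distributions on cosets of the same subspace $\img(B)$ are identical exactly when the cosets coincide. Hence $\Enc(\msg)|_\cI\equiv\Enc(\msg')|_\cI$ for all $\msg,\msg'$ holds if and only if $A(\msg-\msg')\in\img(B)$ for all $\msg,\msg'$. Equivalently, $\img(A)\subseteq\img(B)$. For the forward direction, take $\msg'=0$: equality of the two distributions forces their supports to be equal, so $A\msg\in\img(B)$. Thus $\Enc$ is $t$-ZK if and only if $\img(A)\subseteq\img(B)$ for every $\cI$ of size $t$.

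\emph{Step 2 (dualization).} The inclusion $\img(A)\subseteq\img(B)$ is equivalent to $\img(B)^\perp\subseteq\img(A)^\perp$, i.e., to the implication: for every $\lambda\in\FF^t$, if $\lambda^T B=0$ then $\lambda^T A=0$. Now $\lambda^T G_\cI = (\lambda^T A,\lambda^T B)$ is exactly the linear combination $w\in\FF^k$ of the rows of $G$ indexed by $\cI$ with coefficients $\lambda$. So the implication says that no linear combination $w$ of these $t$ rows satisfies $w|_{[k]\setminus[k']}=0$ and $w\neq 0$. Ranging over all $\cI$ of size $t$ gives the lemma. Combinations of fewer than $t$ rows are covered by padding with zero coefficients, since $t<n$.

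\emph{Main obstacle.} Step 1 is the delicate point: one must argue carefully that equality of distributions is equivalent to equality of the cosets, in both directions. Going from equal cosets to equal distributions uses that $Br$ is uniform on $\img(B)$, since each fiber of $r\mapsto Br$ has the same size. Going from equal distributions to equal cosets uses the support argument above. Step 2 is then standard finite-dimensional duality, using $(V^\perp)^\perp=V$ over a finite field.
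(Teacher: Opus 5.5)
Your proof is correct and follows essentially the same route as the paper. Your Step~1 condition $\img(A)\subseteq\img(B)$ is exactly the paper's intermediate condition that $0\in\Supp(\Enc(\msg)|_\cI)$ for every $\msg$ (Claim~\ref{clm:equiv_1}), and your Step~2 dualization $\img(A)\subseteq\img(B)\iff\img(B)^\perp\subseteq\img(A)^\perp$ is a compact form of Claim~\ref{clm:equiv_2}, which the paper proves one direction at a time via an inner-product contradiction and the solvability criterion for $Br=-A\msg$.
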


A similar characterization of ZK codes as in the above Lemma~\ref{lem:equiv} was given in~\cite[Lemma 5.1]{BCL20}. Here we provide an alternate self-contained proof. We also note that~\cite[Claim 6.1]{IshaiSVW13} show that the {\em stronger} condition that ``any linear combination of $t$ rows in $G$ does not result in ({\em not} necessarily zero) $w \in \FF^k$ such that $w|_{[k]\setminus [k']}=0$" implies that $\Enc$ is $t$-ZK. Finally, we note that~\cite[Lemma 5.2]{BCL20} showed that a similar condition to the latter is equivalent to the stronger property that $\Enc$ is \emph{uniform} $t$-ZK. ($\Enc$ is \emph{uniform $t$-ZK} if for any $\msg \in \FF^{k'}$ and for any $\cI \subseteq [n]$ of size $t$, $\Enc(\msg)|_\cI$ is the uniform distribution over $\FF^t$.)

\paragraph{Proof of Lemma~\ref{lem:equiv}:}
For $\cI \subseteq [n]$, let $G|_\cI $ denote the restriction of $G$ to the rows in $\cI$.
It suffices to show that  for any subset $\cI\subseteq\Set{n}$, $\Enc(\msg)|_\cI \equiv \Enc(\msg')|_\cI$ for any pair  of  messages $\msg,\msg'\in\FF^{k'}$ if and only if any linear combination of the rows of $G|_\cI$  does not result in a non-zero $w \in \FF^k$ such that $w|_{[k]\setminus [k']}=0$. 
We prove the lemma in two steps. First, we show (in Claim~\ref{clm:equiv_1}) that the former requirement ``$\Enc(\msg)|_\cI \equiv \Enc(\msg')|_\cI$ for any $\msg,\msg' \in \FF^{k'}$'' is equivalent to requiring that ``$0 \in \Supp(\Enc(\msg)|_\cI)$ for any $\msg \in \FF^{k'}$". Then, we show (in Claim~\ref{clm:equiv_2}) that the requirement ``$0 \in \Supp(\Enc(\msg)|_\cI)$ for any $\msg \in \FF^{k'}$'' is equivalent to the latter requirement ``any linear combination of rows of $G|_\cI$ does not result in $0\neq w \in \FF^k$ with $w|_{[k]\setminus [k']}=0$''. This will conclude the proof of the lemma.

\begin{clm}\label{clm:equiv_1}
Let $\cI\subseteq\Set{n}$. Then $\Enc(\msg)|_\cI \equiv \Enc(\msg')|_\cI$ 
for any pair of  messages $\msg,\msg'\in\FF^{k'}$  if and only if $0 \in \Supp(\Enc(\msg)|_\cI)$ for any $\msg \in \FF^{k'}$.
\end{clm}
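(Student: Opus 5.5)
We will treat $\Enc(\msg)|_\cI$ as the image of a uniformly random vector under an affine map and use the structure of such distributions. Write $G|_\cI = [A \mid B]$, where $A$ consists of the first $k'$ columns and $B$ of the last $k-k'$ columns of $G|_\cI$. Then $\Enc(\msg)|_\cI = A\msg + Br$ with $r$ uniform in $\FF^{k-k'}$.

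\emph{Key structural fact.} Since $r\mapsto Br$ is a linear map applied to a uniform vector, $Br$ is uniformly distributed on the subspace $V:=\img(B)$. Indeed, every point of $V$ has exactly $|\ker B|$ preimages. Hence $\Enc(\msg)|_\cI$ is uniform on the coset $A\msg + V$.

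\emph{Forward direction.} Assume all the distributions coincide. Take $\msg'=0$. Then $\Enc(0)|_\cI = Br$ has $0$ in its support (choose $r=0$). Since $\Enc(\msg)|_\cI\equiv\Enc(0)|_\cI$ for every $\msg$, their supports are equal, so $0\in\Supp(\Enc(\msg)|_\cI)$ for all $\msg$.

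\emph{Backward direction.} Assume $0\in\Supp(\Enc(\msg)|_\cI)$ for every $\msg$. The support is the coset $A\msg+V$, so $0 \in A\msg + V$, which means $A\msg\in V$. Therefore $A\msg+V=V$. Then every $\Enc(\msg)|_\cI$ is the uniform distribution on the same subspace $V$, and any two such distributions are identical.

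The only point needing care is the claim that the image of a uniform vector under a linear map is uniform on the image. This follows from the fibers $Br=v$ all being cosets of $\ker B$, hence of equal size. Apart from this, the argument is routine bookkeeping.
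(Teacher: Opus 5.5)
Your proof is correct and follows essentially the paper's route: the forward direction is the same comparison with $\Enc(0)|_\cI$, whose support contains $0$, and the backward direction in both rests on each $\Enc(\msg)|_\cI$ being uniform on its support by the equal-fiber-size argument. The only difference is cosmetic: you identify the support explicitly as the coset $A\msg+\img(B)$, so equal supports follow at once from $A\msg\in\img(B)$, whereas the paper derives equal supports via the containment $\Supp(\Enc(\msg')|_\cI)\supseteq\Supp(\Enc(\msg)|_\cI)+\Supp(\Enc(\msg'-\msg)|_\cI)$.
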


\begin{proof}
For the 'only if' part, note that if $0 \notin \Supp(\Enc(\msg)|_\cI)$ for some $0 \neq \msg \in \FF^{k'}$, then since $0 \in \Supp(\Enc(0)|_\cI)$, then we clearly have that $\Enc(\msg)|_\cI \not\equiv \Enc(0)|_\cI$. 

\sloppy
For the 'if' part, assume that $0 \in \Supp(\Enc(\msg)|_\cI)$ for any $\msg \in \FF^{k'}$. Then in this case, for any $\msg,\msg'\in\FF^{k'}$, we have that $0 \in \Supp(\Enc(\msg'-\msg)|_\cI)$, and so by linearity, 
 $$\Supp(\Enc(\msg')|_\cI) \supseteq \Supp(\Enc(\msg)|_\cI)+ \Supp(\Enc(\msg'-\msg)|_\cI) \supseteq  \Supp(\Enc(\msg)|_\cI).$$ 
 Indeed, the right containment uses the fact that $0 \in \Supp(\Enc(\msg'-\msg)|_\cI)$. The left containment follows from linearity, because if $u:=G|_\cI\cdot (\msg, r) \in \Supp(\Enc(\msg)|_\cI)$ and $u': = G|_\cI\cdot (\msg'-\msg, r') \in \Supp(\Enc(\msg'-\msg)|_\cI)$, then $u + u' = G|_\cI \cdot (\msg', r+r') \in \Supp(\Enc(\msg')|_\cI)$. 
 In summary, $\Supp(\Enc(\msg')|_\cI) \supseteq  \Supp(\Enc(\msg)|_\cI)$ for any pair of messages $\msg,\msg'\in\FF^{k'}$, so we conclude that  $\Supp(\Enc(\msg)|_\cI) =  \Supp(\Enc(\msg')|_\cI) $ for any pair of messages $\msg,\msg'\in\FF^{k'}$. 
 Finally, observe that  by properties of linear algebra, for any $\msg \in\FF^{k'}$, the number of $r \in \FF^{k-k'}$ which satisfy the system of linear equations $G|_\cI \cdot (\msg, r)=v$ is the same for any $v \in \Supp(\Enc(\msg)|_\cI)$. Consequently, 
 the fact that  $\Supp(\Enc(\msg)|_\cI) =  \Supp(\Enc(\msg')|_\cI) $  implies that 
 $\Enc(\msg)|_\cI \equiv \Enc(\msg')|_\cI$, and we conclude that $\Enc(\msg)|_\cI \equiv \Enc(\msg')|_\cI$ 
for any pair of  messages $\msg,\msg'\in\FF^{k'}$. 
\end{proof}

\begin{clm}\label{clm:equiv_2}
Let $\cI \subseteq [n]$. Then $0 \in \Supp(\Enc(\msg)|_\cI)$ for any $\msg \in \FF^{k'}$
if and only if any linear combination of the rows of $G|_\cI$  does not result in a non-zero $w \in \FF^k$ such that $w|_{[k]\setminus [k']}=0$.
\end{clm}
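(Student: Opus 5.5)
The plan is to split $G|_\cI$ into column blocks and reduce both conditions to the same statement about images of linear maps. Write $G|_\cI = [A \mid B]$, where $A \in \FF^{|\cI|\times k'}$ consists of the first $k'$ columns and $B \in \FF^{|\cI|\times (k-k')}$ of the last $k-k'$ columns. Then $\Enc(\msg)|_\cI = A\msg + Br$ with $r$ uniform. Hence $0 \in \Supp(\Enc(\msg)|_\cI)$ holds exactly when $A\msg \in \img(B)$. So the left-hand condition, quantified over all $\msg$, is equivalent to $\img(A) \subseteq \img(B)$.

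Next I will rewrite the right-hand condition. A linear combination of the rows of $G|_\cI$ with coefficient vector $\lambda \in \FF^{|\cI|}$ gives $w = \lambda^T G|_\cI = (\lambda^T A, \lambda^T B)$. Thus $w|_{[k]\setminus[k']}=0$ means $\lambda^T B = 0$, and $w \neq 0$ then means $\lambda^T A \neq 0$. So the right-hand condition says: for every $\lambda$ with $\lambda \perp \img(B)$, we also have $\lambda \perp \img(A)$. Equivalently, $\img(B)^\perp \subseteq \img(A)^\perp$.

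Finally, I will use the standard fact that, for subspaces $U,V$ of $\FF^{|\cI|}$, $U \subseteq V$ holds if and only if $V^\perp \subseteq U^\perp$. This uses $(U^\perp)^\perp = U$, the same fact as $(C^\perp)^\perp = C$ in the preliminaries, which holds over finite fields even though the bilinear form is not definite. Applying it with $U = \img(A)$ and $V = \img(B)$ shows the two conditions are equivalent.

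I do not expect a serious obstacle. The only point needing care is the double-orthogonal-complement step over a finite field, which follows from a dimension count: $\dim U^\perp = |\cI| - \dim U$.
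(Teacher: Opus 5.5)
Your proof is correct and follows essentially the same route as the paper. The paper uses the same split $G|_\cI = [A \mid B]$ but argues message-by-message: a unit-vector message $\msg = e_j$ for one direction, and the solvability criterion for $B r = -A\msg$ for the other. This is exactly the pointwise form of your duality $\img(A)\subseteq\img(B) \iff \img(B)^\perp\subseteq\img(A)^\perp$. Your global phrasing has one small advantage: it makes explicit, via $(U^\perp)^\perp = U$ and the dimension count, the fact the paper invokes only as ``properties of linear algebra''.
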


\begin{proof}
For the 'only if' part, suppose that there exists a linear combination $u \in \FF^{|\cI|}$
of the rows of $G|_\cI$ which results in a non-zero $w \in \FF^k$ such that $w|_{[k]\setminus [k']}=0$, and let $j \in [k']$ be an entry so that $w_j \neq 0$. Let $\msg \in \FF^{k'}$ be the $j$-th unit vector. We shall show that there does not exist an $r \in \FF^{k-k'}$ so that  
 $G|_\cI \cdot (\msg,r)= 0$, and consequently $0 \notin \Supp(\Enc(\msg)|_\cI)$. To see the latter, suppose on the contrary that there exists an $r \in \FF^{k-k'}$ so that  
 $G|_\cI \cdot (\msg,r)= 0$. Then we have that 
 $$0= \langle u , G|_\cI \cdot (\msg,r) \rangle  = \langle (u^T \cdot G|_\cI)^T , (\msg,r) \rangle = \langle w,  (\msg,r)\rangle = \langle w|_{[k']}, \msg \rangle + \langle w|_{[k]\setminus [k']}, r\rangle =w_j +0 = w_j \neq 0, $$
 which is a contradiction. 

For the 'if' part, suppose that there exists an $\msg \in \FF^{k'}$ so that $0 \notin \Supp(\Enc(\msg)|_\cI)$. Then the linear system $G|_\cI \cdot (\msg, r)=0$ does not have a solution $r \in \FF^{k-k'}$. Let $A$ be the matrix which consists of the first $k'$ columns of $G|_\cI$, and let $B$ be the matrix which consists of the last $k-k'$ columns of $G|_\cI$. Then by properties of linear algebra, we have that there exists a linear combination $u \in \FF^{|\cI|}$ so that $\langle u, A \cdot \msg \rangle
= \langle (u^T \cdot A)^T, \msg \rangle
\neq 0$ but $u^T \cdot B =0$. 
(Indeed, the system $G|_\cI \cdot (\msg, r)=0$ is equivalent to the system $B \cdot r = -A \cdot \msg$, where $A,B,\msg$ are fixed.)
But this implies in turn that 
$w:=u^T \cdot G|_\cI = ( u^T \cdot A, u^T \cdot B)$ is a linear combination of the rows of $G|_\cI$ which satisfies that $w|_{[k']}  = u^T \cdot A \neq 0$, but $w|_{[k]\setminus [k']} = u^T \cdot B =0$. 
\end{proof} 
\noindent This concludes the proof of Lemma~\ref{lem:equiv}. \qed

\vspace{2mm}
\noindent We now turn to the proof of Theorem~\ref{thm:equiv}, based on the above Lemma~\ref{lem:equiv}. 

\paragraph{Proof of Thm~\ref{thm:equiv}:}

We prove each of the items separately.

\medskip \noindent \textbf{Item (1):} Follows since $C_X^{\perp} = C^{\perp}$, and $C_Z^{\perp}$ is the span of the last $k-k'$ columns of $G$, and so $C_Z^{\perp} \subseteq C$. 

\medskip \noindent \textbf{Item (2):} Recall that $\Enc$ is decodable from $e$ errors if and only if for any pair of distinct messages $\msg \neq \msg' \in \FF^{k'}$ and (not necessarily distinct) $r,r' \in \FF^{k-k'}$, it holds that 
$\Delta(G\cdot (\msg,r), G \cdot (\msg',r')) >2e$. Further, by linearity, this latter property is equivalent to the property that for any non-zero $\msg  \in \FF^{k'}$ and (possibly zero) $r \in \FF^{k-k'}$ it holds that $\wt(G \cdot (\msg,r)) >2e$.
Thus, it suffices to show that 
$d_X =\min_{0 \neq \msg  \in \FF^{k'}, r \in \FF^{k-k'}} \wt(G\cdot (\msg,r))$.  

\noindent But the above follows since 
\begin{eqnarray*}
C_X \setminus C_Z^\perp & =&  \{c \in C_X \mid c \notin C_Z^\perp\} \\
& =&  \{c \in C \mid c \; \text{is not in the span of the last $k-k'$ columns of $G$}\; \} \\
& = & \{G\cdot (\msg,r) \mid 0 \neq \msg \in \FF^{k'}, r \in \FF^{k-k'}\},
\end{eqnarray*}
and so $d_X = \min_{c \in C_X \setminus C_Z^\perp} \wt(c) = \min_{0 \neq \msg  \in \FF^{k'}, r \in \FF^{k-k'}} \wt(G\cdot (\msg,r))$.

\medskip \noindent \textbf{Item (3):} By Lemma~\ref{lem:equiv}, $\Enc$ is $t$-ZK if and only if any linear combination of any $t$ rows in $G$ does not result in a non-zero $w \in \FF^k$ such that $w|_{[k]\setminus [k']}=0$. We shall show that the latter condition is equivalent to the condition that $d_Z >t$. To see this, note that a linear combination of $t$ rows in $G$ resulting in $w \in \FF^k$, corresponds to a vector $u \in \FF^n$ of weight at most $t$ so that $u^T \cdot G=w$. Furthermore, the condition that $w \neq 0$ is equivalent to the condition that $u \notin C_X^\perp = C^\perp$, while the condition that $w|_{[k]\setminus [k']}=0$ is equivalent to the condition that $u \in C_Z$. Thus, the condition that  any linear combination of any $t$ rows in $G$ does not result in a non-zero $w \in \FF^k$ such that $w|_{[k]\setminus [k']}=0$ is equivalent to the condition that there do not exist $u\in C_Z \setminus C_X^{\perp}$ of weight at most $t$,  which is equivalent to the condition that $d_Z>t$. 
\qed 

\section{Application: Explicit Asymptotically-Good Zero-knowledge Locally-Testable Codes}\label{s:app}
We now describe an immediate application of the equivalence between ZK codes and quantum CSS codes of Section~\ref{s:equiv}. Specifically, we use recent constructions of asymptotically-good locally testable quantum codes to obtain explicit asymptotically-good ZK codes that are also {\em locally testable} with a few queries. We first formally define locally-testable codes (LTCs).

\BD[Locally-Testable Code (LTC)]\label{def:ltc} A code $C\subseteq \FF^n$ is a \textdef{$q$-query Locally Testable Code ($q$-LTC)} if there exists a randomized oracle 
algorithm $\test$ 
which  receives oracle access to a string $w\in\FF^{n}$, makes $q$ queries to $w$, and outputs either 'accept' or 'reject', so that the following conditions holds:
\begin{itemize}
\item \textbf{Completeness:} If $w\in C$ then $\test$ accepts with probability $1$.
\item \textbf{Soundness:} If $w\notin C$ then $\test$ rejects with probability at least 
$\frac 1 4 \cdot \frac{\Delta(w,C)} {n}$. 
\end{itemize}
\ED

A ZK code that is also an LTC, with a ZK threshold that is significantly larger than the query complexity of the local tester, is called a {\em ZK-LTC}.
Such codes lie at the heart of ZK-PCP and ZK-IOP constructions (and are also used in other cryptographic contexts such as verifiable secret sharing). Ishai et al.~\cite{IshaiSVW13} gave a generic method of (probabilistically) transforming any linear code into a ZK code (Their probabilistic transformation outputs a generator matrix for the ZK code, with negligible probability of error.) They then use this transformation to obtain a \emph{probabilistic}  construction of asymptotically good ZK-LTCs. Combining new quantum LTC constructions~\cite{DLV24,KP25,WLH25}, and the equivalence between quantum CSS codes and ZK codes, yields an {\em explicit} construction of asymptotically-good ZK-LTCs. This is formalized in Corollary~\ref{cor:zk-ltc} below. We first cite the relevant quantum LTCs (see~\cite[Thm. 1.3 and Table 4]{WLH25}, who build on the codes of~\cite{DLV24, KP25}): 

\BT[Asymptotically-good quantum LTCs~\cite{DLV24,WLH25}]\label{t:css-codes}
There exists an explicit infinite family of CSS codes $C_X,C_Z \subseteq \FF^n$, where $(C_X,C_Z)$ has constant rate and distance $\Omega(n)$, and both  $C_X$ and $C_Z$ are locally testable with  $\poly\log(n)$ queries.
\ET

Combining the above Theorem~\ref{t:css-codes} with Corollary~\ref{cor:equiv} gives explicit asymptotically-good ZK-LTCs with linear ZK threshold that are locally testable with a poly-logarithmic number of queries. To the best of our knowledge, this is the first instance of an explicit family of asymptotically-good ZK-LTCs in which the ZK threshold is larger than the tester's query complexity.
\BCR\label{cor:zk-ltc}
There exists an explicit infinite family of codes $\mathcal{C}=\left(C_n\right)_n$, where  $C_n \subseteq \FF^n$ is a linear code that is locally testable with $\poly\log(n)$ queries. Furthermore, there exist  an explicit generator matrix $G$ for $C_n$ and $k'= \Theta(n)$, 
so that the $k'$-randomized encoding map $\Enc: \FF^{k'} \to C_n$ for $G$ is $\Omega(n)$-ZK, and is
decodable from $\Omega(n)$ errors. 
\ECR

\begin{remarkk}
We note that quantum LTCs satisfy the stronger requirement that \emph{both} $C_X$ and $C_Z$ are locally testable, while the application for ZK-LTCs only requires that $C=C_X$ is locally testable. In particular, while we do not know of asymptotically-good constant-query quantum LTCs, 
combining the transformation of \cite{IshaiSVW13} with the asymptotically-good constant-query classical LTCs of \cite{DELLM22, PK22} gives a \emph{probabilistic} construction of asymptotically-good ZK-LTCs, 
with linear ZK threshold and \emph{constant} query complexity.
\end{remarkk}

\paragraph{Acknowledgement.} We thank Louis Golowich, Thomas Vidick, and  Gilles Z{\'{e}}mor for discussions on the topic. The discussion leading to this note was initiated at the ``Error-Correcting Codes: Theory and Practice Reunion'' held at the Simons Institute for the Theory of Computing on April 2025. Research supported in part by a grant from the UC Noyce Initiative to the Simons Institute for the Theory of Computing. 

Noga Ron-Zewi was partially supported by the European Union (ERC, ECCC, 101076663). Views and opinions expressed are however those of the author(s) only and do not necessarily reflect those of the European Union or the European Research Council. Neither the European Union nor the granting authority can be held responsible for them. The second author is supported by ISF grant No. 434/24.

\bibliography{bibliography}
\bibliographystyle{alpha}

\appendix

\end{document}